\documentclass[pra,onecolumn,groupedaddress,superscriptaddress,amsmath,amssymb,longbibliography]{revtex4-2}
\usepackage{amsmath,amssymb,bm,amsthm}
\usepackage{graphicx,caption,subcaption}
\usepackage{enumerate}  
\usepackage[colorlinks=true,citecolor=blue,linkcolor=blue,urlcolor=blue]{hyperref}

\newtheorem{proposition}{Proposition}
\newtheorem{lemma}{Lemma}
\newtheorem{fact}{Fact}
\newtheorem{remark}{Remark}

\newcommand{\calE}{\mathcal E}
\newcommand{\id}{\mathcal I}

\newcommand{\ket}[1]{\lvert #1\rangle}
\newcommand{\bra}[1]{\langle #1\rvert}
\newcommand{\ketbra}[2]{\lvert #1\rangle\langle #2\rvert}
\newcommand{\abs}[1]{\left\lvert #1\right\rvert}
\newcommand{\norm}[1]{\left\lVert #1\right\rVert}
\newcommand{\basisB}{\mathcal B}

\begin{document}
\title{Entanglement-Assisted Timing Optimization for Discriminating
Amplitude-Damping Dynamics}

\author{Massimiliano F. Sacchi}
\affiliation{CNR-Istituto di Fotonica e Nanotecnologie, Piazza Leonardo da Vinci 32, I-20133, Milano, Italy} 
\affiliation{Dipartimento di Fisica, Universit\`a di Pavia, Via A. Bassi 6, I-27100, Pavia, Italy}
\author{Milajiguli Rexiti}
\author{Stefano Mancini}
\affiliation{School of Science and Technology, University of Camerino,
Via Madonna delle Carceri 9, Camerino I-62032, Italy}
\affiliation{INFN--Sezione Perugia, Via A. Pascoli, Perugia I-06123, Italy}


\begin{abstract}
We analyze minimum-error discrimination of two qubit dynamical processes
generated by phase-covariant amplitude-damping Lindbladians in a single-use
scenario.  The optimization involves both the input probe, possibly entangled
with an isolated ancilla, and the interrogation time.  For unassisted probes we
obtain a closed expression for the optimal trace-norm distinguishability at
fixed time, with distinct interior and boundary branches.  For
entanglement-assisted probes, the common phase covariance of the two channels
reduces the diamond-norm optimization to a one-parameter Schmidt family.  The
resulting formula gives transparent sufficient conditions for fixed-time
entanglement advantage and separates this local advantage from advantage after
global optimization over time.  We exhibit examples in which the best
unassisted strategy is approached only asymptotically, whereas an entangled
probe achieves a strictly smaller error probability at finite time.
\end{abstract}

\maketitle

\section{Introduction}
\label{sec:introduction}
Discrimination tasks provide one of the most direct operational ways of
quantifying the distinguishability of quantum states, channels, and dynamical
processes.  In the minimum-error setting, the optimal discrimination of two
quantum states is governed by the Helstrom trace-norm formula
\cite{Helstrom76,BaeKwek15}.  For quantum channels the analogous problem is
richer: one must optimize over the input probe, and the use of an ancillary
system left untouched by the channel leads to the completely bounded trace norm,
or diamond norm \cite{Sacchi05a,Watrous18}.  It is now well established that
side entanglement can strictly enhance channel discrimination, even for strongly
noisy channels and, in particular, for entanglement-breaking maps
\cite{Sacchi05c,PianiWatrous09,Oskouei23}.  Other discrimination paradigms,
including minimax, unambiguous, perfect, adaptive, and sequential channel
discrimination, further show that the operational separation between strategies
can depend sensitively on the resources available to the experimenter
\cite{dsk,Wang06,Duan09,Chiribella08,Harrow10}.

In many physical situations the alternatives to be discriminated are not static
channels but dynamical processes.  A probe interacts with an open system for a
chosen interrogation time, after which a measurement is performed.  The choice
of this time is itself part of the discrimination strategy: waiting longer may
increase the difference between stationary states, but it can also erase useful
coherences or make distinct transients indistinguishable.  The resulting
problem therefore combines two optimizations, one over input states and one over
time.  This time-dependent aspect is especially relevant for Markovian
open-system evolutions generated by Gorini--Kossakowski--Sudarshan--Lindblad
semigroups \cite{Gorini76,Lindblad76,BreuerPetruccione02,RivasHuelga12}.  It
also connects with recent work on the discrimination of dynamical processes, in
which optimal timing was shown to interact nontrivially with side entanglement
for Pauli semigroups \cite{Sacchi26}.  Closely related operational questions arise in quantum thermometry and bath tagging, where quantum probes discriminate alternative open-system evolutions induced by reservoirs with different temperatures, spectral structures, or statistical properties \cite{Jevtic15,Candeloro21,Farina19,Farina22}.

The present work studies this problem for phase-covariant amplitude-damping
semigroups. We use the term amplitude-damping semigroup in the generalized finite-temperature
sense, allowing both excitation and relaxation together with pure
dephasing. These qubit dynamics provide a simple but nontrivial class of generally
non-unital Markovian processes \cite{Filippov20,Siudzinska22}.  This non-unital
character is essential for the structure of the optimization.  For Pauli
channels, the discrimination problem is diagonal in the Pauli basis and
maximally entangled probes have a particularly simple status
\cite{Sacchi05b,Sacchi26}.  By contrast, amplitude-damping processes contain
both a longitudinal contraction and a translation of the Bloch ball.  As a
consequence, the optimal unassisted probe need not be an eigenstate of a Pauli
operator, and the optimal entangled input need not be maximally entangled.  The
amplitude-damping and dephasing cases have also been studied from complementary
perspectives, including channel-specific analyses and general bounds on process
discrimination \cite{Rexiti21,Nakahira21,Rexiti22,Cooney16}.

Our main results are as follows.  First, for unassisted probes we derive a
closed formula for the optimal trace-norm distinguishability at each time.  The
formula displays two possible branches: an interior branch, where a coherent
superposition is optimal, and a boundary branch, where an energy eigenstate is
optimal.  Second, for entanglement-assisted probes we exploit the common
$U(1)$ phase covariance of the two candidate channels.  Averaging the
diamond-norm semidefinite program over this symmetry shows that an optimal
assisted input can be chosen in a one-parameter Schmidt family.  This yields a
single-variable expression for the diamond-norm distinguishability and gives
transparent fixed-time criteria for when side entanglement is useful.  Third,
we optimize over the interrogation time and exhibit regimes in which the best
unassisted strategy is only asymptotically optimal, while an entangled probe
achieves a strictly smaller error probability at finite time.

The paper is organized as follows.  Section~\ref{sec:framework-model}
introduces the minimum-error discrimination setting, both with and without
side entanglement, and then specifies the phase-covariant amplitude-damping
semigroups considered throughout the paper.  Section~\ref{sec:unassisted}
derives the unassisted optimum.  Section~\ref{sec:assisted} proves the
Schmidt-family reduction for the assisted problem and gives sufficient
conditions for fixed-time entanglement advantage.  Section~\ref{sec:examples}
presents examples illustrating finite-time and asymptotic optima, threshold
behavior, and genuine non-unital entanglement-assisted advantages. The results
show that timing and entanglement are not independent resources in
dynamical process discrimination.  Entanglement can alter not only the
maximum distinguishability attainable from a single use of the
process, but also the time at which the optimum is effectively
reached.


\section{Discrimination framework and dynamical model}
\label{sec:framework-model}
We first recall the operational quantities used to compare the two
candidate dynamical processes and then specialize them to the
phase-covariant amplitude-damping semigroups studied in this work.
\subsection{Minimum-error discrimination with and without side entanglement}
\label{sec:general-framework}
Let the unknown dynamics be generated by one of two Lindbladians, labelled by
$i=0,1$, with equal prior probabilities.  At time $t$ the two alternatives
define channels $\calE_t^{(0)}$ and $\calE_t^{(1)}$.  Without side entanglement,
the minimum error probability is \cite{Sacchi05a}
\begin{equation}
 p_E(t)=\frac12-\frac14D_{\rm sep}(t),
 \label{eq:pE-sep-time}
\end{equation}
where
\begin{equation}
 D_{\rm sep}(t):=
 \max_{\rho}\norm{\calE_t^{(0)}(\rho)-\calE_t^{(1)}(\rho)}_1 .
\label{eq:Dsep-def}
\end{equation}
Here $\norm{\cdot}_1$ denotes the trace norm.  With an isolated ancilla, the
corresponding error probability is \cite{Sacchi05a}
\begin{equation}
 \widetilde p_E(t)=\frac12-\frac14D_{\rm ent}(t),
 \label{eq:pE-ent-time}
\end{equation}
where
\begin{equation}
 D_{\rm ent}(t):=
 \max_{\rho}\norm{(\id\otimes\calE_t^{(0)})(\rho)
 -(\id\otimes\calE_t^{(1)})(\rho)}_1 .
\label{eq:Dent-def-general}
\end{equation}
For qubit channels a two-dimensional ancilla is sufficient
\cite{Watrous18}.  The ultimate error probabilities are obtained by
optimizing over the interrogation time:
\begin{equation}
 p_E^*=\frac12-\frac14\sup_{t\ge0}D_{\rm sep}(t),
 \qquad
 \widetilde p_E^{\,*}=\frac12-\frac14\sup_{t\ge0}D_{\rm ent}(t).
\label{eq:ultimate-errors}
\end{equation}
Thus an assisted strategy is strictly better after time optimization precisely
when
\begin{equation}
 \sup_{t\ge0}D_{\rm ent}(t)>\sup_{t\ge0}D_{\rm sep}(t).
 \label{eq:global-assisted-criterion}
\end{equation}

\subsection{Phase-covariant amplitude-damping semigroups}
\label{sec:phase-covariant}
We consider the qubit master equation \cite{Gorini76,Lindblad76,Filippov20,Siudzinska22}
\begin{equation}
\begin{aligned}
\dot\rho(t)=&\;\gamma_+\left(\sigma_+\rho(t)\sigma_- -\frac12\{\sigma_-\sigma_+,\rho(t)\}\right)
+\gamma_-\left(\sigma_-\rho(t)\sigma_+ -\frac12\{\sigma_+\sigma_-,\rho(t)\}\right)\\
&+\gamma_z\left(\sigma_z\rho(t)\sigma_z-\rho(t)\right),
\end{aligned}
\label{eq:master}
\end{equation}
where all rates $\gamma_+,\gamma_-,\gamma_z$ are non-negative,
$\sigma_\pm=(\sigma_x\pm i\sigma_y)/2$, and
$\sigma_x,\sigma_y,\sigma_z$ are the Pauli operators.  The positive and
negative eigenstates of $\sigma_z$ are denoted by $\ket{0}$ and $\ket{1}$,
respectively.  The channel induced by Eq.~\eqref{eq:master},
\begin{equation}
 \rho(0)\longmapsto \rho(t)=\calE_t(\rho(0)),
 \label{eq:Emap}
\end{equation}
acts on the Bloch vector as
\begin{equation}
 (r_x,r_y,r_z)\longmapsto (\lambda r_x,\lambda r_y,\lambda_z r_z+t_z),
 \label{eq:bloch-action}
\end{equation}
with
\begin{equation}
 \lambda=e^{-\frac12(\Gamma+4\gamma_z)t},
 \qquad
 \lambda_z=e^{-\Gamma t},
 \qquad
 t_z=\chi(1-e^{-\Gamma t}),
 \label{eq:parameters}
\end{equation}
and
\begin{equation}
 \Gamma=\gamma_++\gamma_-,
 \qquad
 \chi=\frac{\gamma_+-\gamma_-}{\gamma_++\gamma_-}.
 \label{eq:Gamma-chi}
\end{equation}
If $\Gamma=0$, we use the limiting convention
\begin{equation}
 \lambda_z=1,
 \qquad
 t_z=0,
 \qquad
 \lambda=e^{-2\gamma_z t}.
 \label{eq:Gamma-zero-convention}
\end{equation}
For the two candidate dynamics define
\begin{equation}
 \alpha (t)=t_z^{(0)}-t_z^{(1)},
 \qquad
 \beta (t)=\lambda_z^{(0)}-\lambda_z^{(1)},
 \qquad
 \zeta (t)=\lambda^{(0)}-\lambda^{(1)}.
 \label{eq:abz}
\end{equation}
These are real functions of $t$ that 
represent the difference in translations, longitudinal
contractions, and transverse contractions of the Bloch ball,
respectively. For ease of notation, if not needed, we omit the
explicit time-dependence of $\alpha , \beta $ and $\zeta$. The difference map
\begin{equation}
 \Delta_t:=\calE_t^{(0)}-\calE_t^{(1)}
 \label{eq:Delta-def}
\end{equation}
is specified on the matrix units by
\begin{equation}
\begin{aligned}
\Delta_t(\ketbra{0}{0})&=\frac{\alpha+\beta}{2}\sigma_z,
&\qquad
\Delta_t(\ketbra{1}{1})&=\frac{\alpha-\beta}{2}\sigma_z,\\
\Delta_t(\ketbra{0}{1})&=\zeta\ketbra{0}{1},
&\qquad
\Delta_t(\ketbra{1}{0})&=\zeta\ketbra{1}{0}.
\end{aligned}
\label{eq:matrixunits}
\end{equation}
The map is covariant under phase rotations $U_\theta=e^{-i\theta\sigma_z/2}$:
\begin{equation}
 \Delta_t(U_\theta XU_\theta^\dagger)=U_\theta\Delta_t(X)U_\theta^\dagger .
 \label{eq:phase-covariance}
\end{equation}
In the bipartite calculations we use the ordered basis
\begin{equation}
 \basisB=\{\ket{0_A0_S},\ket{0_A1_S},\ket{1_A0_S},\ket{1_A1_S}\},
 \label{eq:bipartite-basis}
\end{equation}
where the first label refers to the ancilla and the second to the system.


\section{Unassisted discrimination}
\label{sec:unassisted}

By convexity of the trace norm, the optimization in Eq.~\eqref{eq:Dsep-def}
may be restricted to pure input states,
\begin{equation}
 \ket{\psi}=\sqrt{x}\ket{0}+e^{-i\phi}\sqrt{1-x}\ket{1},
 \qquad 0\le x\le1.
 \label{eq:pure-input}
\end{equation}
The trace norm of the output difference is independent of $\phi$ and equals
\begin{equation}
 D_{\rm sep}(x;t)=
 \sqrt{\left[\alpha+\beta(2x-1)\right]^2+4x(1-x)\zeta^2}.
 \label{eq:Dsep-x}
\end{equation}
Maximizing the quadratic expression under the square root gives
\begin{equation}
D_{\rm sep}(t)=
\begin{cases}
 \abs{\zeta}\sqrt{\displaystyle\frac{\alpha^2-\beta^2+\zeta^2}{\zeta^2-\beta^2}},
 & \zeta^2>\beta^2\ \text{and}\ \abs{\alpha\beta}\le \zeta^2-\beta^2,\\[2.2ex]
 \abs{\alpha}+\abs{\beta}, & \text{otherwise.}
\end{cases}
\label{eq:Dsep}
\end{equation}
In the first branch the maximizing population is
\begin{equation}
 x^*=\frac12\left(1+\frac{\alpha\beta}{\zeta^2-\beta^2}\right),
 \label{eq:xstar-unassisted}
\end{equation}
whereas in the second branch an eigenstate of $\sigma_z$ is optimal.

If both semigroups are genuinely relaxing, i.e. $\Gamma_0,\Gamma_1
>0$, notice that from Eq.~\eqref{eq:Dsep} for both branches one has
\begin{equation}
    \lim_{t\to\infty} D_{\rm sep}(t)=\abs{\chi_0-\chi_1}.
\end{equation}

\begin{fact}[Interior branch]
\label{cond1}
Let
\begin{equation}
I_{\rm int}:=
\left\{t>0:\zeta(t)^2>\beta(t)^2,
\ \abs{\alpha(t)\beta(t)}\le \zeta(t)^2-\beta(t)^2\right\}.
\label{eq:Iint}
\end{equation}
On every connected component of $I_{\rm int}$,
\begin{equation}
D_{\rm sep}(t)^2=F(t):=
\zeta(t)^2+\frac{\alpha(t)^2\zeta(t)^2}{\zeta(t)^2-\beta(t)^2}.
\label{eq:F-interior}
\end{equation}
Every finite maximizer lying in the interior of such a component satisfies
\begin{equation}
F'(t)=
2\zeta\zeta'
+\frac{2\alpha\alpha'\zeta^2}{\zeta^2-\beta^2}
+\frac{2\alpha^2\zeta\beta(\zeta\beta'-\beta\zeta')}{(\zeta^2-\beta^2)^2}=0.
\label{eq:Fprime}
\end{equation}
If a connected component is unbounded and $F'(t)<0$ for all sufficiently large
$t$, then any value of $F$ strictly larger than its asymptotic limit is attained
at a finite time.
\end{fact}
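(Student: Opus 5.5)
The plan is to obtain the three claims of Fact~\ref{cond1} directly from Eq.~\eqref{eq:Dsep}, together with the smoothness of $\alpha,\beta,\zeta$ in $t$.

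\emph{Closed form of $F$.} On $I_{\rm int}$ the first branch of Eq.~\eqref{eq:Dsep} applies, so
\[
D_{\rm sep}(t)^2=\zeta^2\,\frac{\alpha^2-\beta^2+\zeta^2}{\zeta^2-\beta^2}.
\]
Splitting the numerator as $(\zeta^2-\beta^2)+\alpha^2$ gives
\[
D_{\rm sep}^2=\zeta^2+\frac{\alpha^2\zeta^2}{\zeta^2-\beta^2}=F(t),
\]
which is Eq.~\eqref{eq:F-interior}. By Eq.~\eqref{eq:parameters}, $\alpha,\beta,\zeta$ are finite combinations of exponentials in $t$, hence smooth; this includes the $\Gamma=0$ convention of Eq.~\eqref{eq:Gamma-zero-convention}. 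Since $\zeta^2-\beta^2>0$ on $I_{\rm int}$, the function $F$ is smooth on each connected component.

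\emph{Stationarity condition.} Let $t^*$ be a finite maximizer lying in the interior of a component. Then $D_{\rm sep}^2=F$ holds on a neighbourhood of $t^*$, and $D_{\rm sep}\ge0$ is maximized exactly where $D_{\rm sep}^2$ is, so $t^*$ is an interior local maximum of the differentiable function $F$. Fermat's theorem gives $F'(t^*)=0$. To confirm Eq.~\eqref{eq:Fprime} I will differentiate $F$ term by term. The first term gives $2\zeta\zeta'$. For the second, apply the quotient rule to $\alpha^2\zeta^2/(\zeta^2-\beta^2)$:
\[
\frac{(2\alpha\alpha'\zeta^2+2\alpha^2\zeta\zeta')(\zeta^2-\beta^2)-\alpha^2\zeta^2(2\zeta\zeta'-2\beta\beta')}{(\zeta^2-\beta^2)^2}.
\]
In the numerator, the $\alpha^2\zeta\zeta'$ contributions combine to $-2\alpha^2\zeta\zeta'\beta^2$, and the remaining term is $+2\alpha^2\zeta^2\beta\beta'$. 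Together these equal $2\alpha^2\zeta\beta(\zeta\beta'-\beta\zeta')$, which reproduces the stated formula. This algebraic check is the only place I expect any care is needed.

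\emph{Asymptotic claim.} Suppose the component contains an interval $(T,\infty)$ on which $F'<0$. Then $F$ is strictly decreasing there, so the limit $L=\lim_{t\to\infty}F(t)$ exists (possibly $-\infty$ a priori, though here $F\ge0$), and $F(t)>L$ for every $t>T$. Now suppose some value $v>L$ is attained by $F$ on the component, or more precisely that $\sup F>L$. Pick $T'>T$ with $F(T')<\sup F$; such a $T'$ exists because $F(t)\to L<\sup F$. Then $\sup F$ is not approached as $t\to\infty$, so it equals the supremum over the portion of the component up to $T'$. If the left endpoint of the component is $0$ or a boundary point of $I_{\rm int}$, then $F$ (equivalently $D_{\rm sep}^2$, by continuity of Eq.~\eqref{eq:Dsep} in $t$) extends continuously to a compact interval, and the supremum is attained at a finite time. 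Therefore any value strictly above the asymptote, and in particular the supremum, is reached at finite $t$ rather than only in the limit, which is the statement being proved.
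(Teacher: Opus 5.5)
Your proposal is correct and follows the paper's own argument: it identifies $D_{\rm sep}^2$ with $F$ via the first branch of Eq.~\eqref{eq:Dsep}, gets Eq.~\eqref{eq:Fprime} by differentiation plus Fermat's theorem at an interior maximizer, and combines the strictly decreasing tail with continuity on a compact interval to obtain a finite maximizer. Your explicit quotient-rule check, and your use of the continuity of $D_{\rm sep}$ in $t$ to handle the left endpoint of the component, fill in details the paper leaves implicit.
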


This fact follows from observing that, on $I_{\rm int}$ the first branch of Eq.~\eqref{eq:Dsep} applies, so
maximizing $D_{\rm sep}(t)$ is equivalent to maximizing the non-negative function
$F(t)$.  Differentiating Eq.~\eqref{eq:F-interior} gives Eq.~\eqref{eq:Fprime},
which is therefore necessary for an interior finite-time maximum.  On an
unbounded component with a decreasing tail, the supremum cannot be approached
only at infinity if some finite value exceeds the asymptotic limit; continuity
on compact intervals then gives a finite maximizer.

\begin{fact}[Boundary branch]
\label{cond2}
Assume that Eq.~\eqref{eq:Dsep} is on its boundary branch on a time interval,
so that $D_{\rm sep}(t)=\abs{\alpha(t)}+\abs{\beta(t)}$.  Let
\begin{equation}
 \Gamma_i=\gamma_+^{(i)}+\gamma_-^{(i)},
 \qquad
 \chi_i=\frac{\gamma_+^{(i)}-\gamma_-^{(i)}}{\Gamma_i},
 \qquad
 L=\chi_0-\chi_1,
 \label{eq:boundary-parameters}
\end{equation}
with $\Gamma_i>0$.  If $\alpha(t)\beta(t)\ge0$ on the interval, then
\begin{equation}
D_{\rm sep}(t)=\abs{h_-(t)},
\qquad
h_-(t)=L+\frac{2\gamma_-^{(0)}}{\Gamma_0}e^{-\Gamma_0t}
-\frac{2\gamma_-^{(1)}}{\Gamma_1}e^{-\Gamma_1t}.
\label{eq:hminus}
\end{equation}
If $0<\gamma_-^{(0)}<\gamma_-^{(1)}$ and $\Gamma_1>\Gamma_0$, the only
positive stationary point of $h_-$ is
\begin{equation}
t_-^*=\frac{\log\gamma_-^{(1)}-\log\gamma_-^{(0)}}{\Gamma_1-\Gamma_0},
\label{eq:tminus-star}
\end{equation}
and it is a local maximum of $h_-$.  Similarly, if $\alpha(t)\beta(t)<0$ on
the interval, then
\begin{equation}
D_{\rm sep}(t)=\abs{h_+(t)},
\qquad
h_+(t)=L-\frac{2\gamma_+^{(0)}}{\Gamma_0}e^{-\Gamma_0t}
+\frac{2\gamma_+^{(1)}}{\Gamma_1}e^{-\Gamma_1t}.
\label{eq:hplus}
\end{equation}
If $0<\gamma_+^{(0)}<\gamma_+^{(1)}$ and $\Gamma_1>\Gamma_0$, the only
positive stationary point of $h_+$ is
\begin{equation}
t_+^*=\frac{\log\gamma_+^{(1)}-\log\gamma_+^{(0)}}{\Gamma_1-\Gamma_0},
\label{eq:tplus-star}
\end{equation}
and it is a local minimum of $h_+$.  On each connected portion of a sign branch,
the branch maximum is obtained by comparing the relevant stationary value, if it
belongs to the portion, with endpoint and asymptotic values.
\end{fact}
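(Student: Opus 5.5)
The plan is direct computation from the explicit parameters in Eq.~\eqref{eq:parameters}. With $\Gamma_i>0$ we have $t_z^{(i)}=\chi_i(1-e^{-\Gamma_i t})$ and $\lambda_z^{(i)}=e^{-\Gamma_i t}$. Hence
\begin{equation*}
\alpha=L-\chi_0e^{-\Gamma_0t}+\chi_1e^{-\Gamma_1t},
\qquad
\beta=e^{-\Gamma_0t}-e^{-\Gamma_1t}.
\end{equation*}
On the boundary branch $D_{\rm sep}=\abs{\alpha}+\abs{\beta}$. If $\alpha\beta\ge0$ this equals $\abs{\alpha+\beta}$, and if $\alpha\beta<0$ it equals $\abs{\alpha-\beta}$. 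So the first step is to verify that $\alpha+\beta=h_-$ and $\alpha-\beta=h_+$. This uses $1-\chi_i=2\gamma_-^{(i)}/\Gamma_i$ and $1+\chi_i=2\gamma_+^{(i)}/\Gamma_i$:
\begin{equation*}
\alpha+\beta=L+(1-\chi_0)e^{-\Gamma_0t}-(1-\chi_1)e^{-\Gamma_1t}=h_-(t),
\end{equation*}
and similarly $\alpha-\beta=L-(1+\chi_0)e^{-\Gamma_0t}+(1+\chi_1)e^{-\Gamma_1t}=h_+(t)$. This establishes Eqs.~\eqref{eq:hminus} and \eqref{eq:hplus}.

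Next I analyze stationary points. Differentiating gives
\begin{equation*}
h_-'(t)=-2\gamma_-^{(0)}e^{-\Gamma_0t}+2\gamma_-^{(1)}e^{-\Gamma_1t}.
\end{equation*}
Setting this to zero gives $e^{(\Gamma_1-\Gamma_0)t}=\gamma_-^{(1)}/\gamma_-^{(0)}$. Since $\Gamma_1\neq\Gamma_0$, there is a unique solution, namely $t_-^*$ in Eq.~\eqref{eq:tminus-star}. It is positive because $\gamma_-^{(1)}>\gamma_-^{(0)}>0$ and $\Gamma_1>\Gamma_0$.

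For its nature, I will avoid the second derivative and argue by sign. Write
\begin{equation*}
h_-'(t)=2e^{-\Gamma_0t}\bigl[\gamma_-^{(1)}e^{-(\Gamma_1-\Gamma_0)t}-\gamma_-^{(0)}\bigr].
\end{equation*}
The bracket is strictly decreasing in $t$, positive at $t=0$ and vanishing at $t_-^*$. So $h_-'>0$ before $t_-^*$ and $h_-'<0$ after, which makes $t_-^*$ a local (indeed global on $t\ge0$) maximum of $h_-$.

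The $h_+$ case is identical with the signs flipped:
\begin{equation*}
h_+'(t)=2e^{-\Gamma_0t}\bigl[\gamma_+^{(0)}-\gamma_+^{(1)}e^{-(\Gamma_1-\Gamma_0)t}\bigr].
\end{equation*}
Here the bracket is strictly increasing and crosses zero only at $t_+^*$. So $h_+$ decreases and then increases, and $t_+^*$ is a local minimum.

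The final claim is a continuity argument. On a connected portion of the time axis where the relevant sign condition holds, $D_{\rm sep}=\abs{h_\pm}$ is continuous, and $h_\pm$ is monotone except at its single stationary point. So the extrema of $h_\pm$ on the portion occur at that stationary point, at the endpoints, or as $t\to\infty$, where $h_\pm\to L$. The maximum of $\abs{h_\pm}$ is then the larger of $\max h_\pm$ and $-\min h_\pm$, both taken from this finite candidate list.

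The one subtle point is the absolute value. A local minimum of $h_+$ can be a maximum of $\abs{h_+}$ if $h_+$ is negative there. This is why the statement only asserts a comparison among candidates rather than identifying the maximizer outright. I expect no real obstacle beyond this bookkeeping.
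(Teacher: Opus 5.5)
Your proposal is correct and follows the paper's own argument. You rewrite $\abs{\alpha}+\abs{\beta}$ as $\abs{\alpha\pm\beta}$ according to the sign of $\alpha\beta$, check $\alpha+\beta=h_-$ and $\alpha-\beta=h_+$ from the explicit parameters, solve the stationary condition, and use the fact that each derivative changes sign at most once to reduce the branch maximum to a comparison of stationary, endpoint and asymptotic values. The only difference is that you classify $t_-^*$ and $t_+^*$ by the sign change of the factored derivative, whereas the paper computes $\ddot h_-(t_-^*)=2\gamma_-^{(0)}e^{-\Gamma_0t_-^*}(\Gamma_0-\Gamma_1)<0$. Your version also gives directly the monotonicity used in the final claim and the global optimality of $t_-^*$ on $t\ge0$.
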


This fact follows upon inspection of the cases below.
If $\alpha\beta\ge0$, then $\abs{\alpha}+\abs{\beta}=\abs{\alpha+\beta}$.
Using Eq.~\eqref{eq:parameters}, one obtains Eq.~\eqref{eq:hminus}.  Moreover,
\begin{equation}
 \dot h_-(t)=-2\gamma_-^{(0)}e^{-\Gamma_0t}
 +2\gamma_-^{(1)}e^{-\Gamma_1t},
 \label{eq:hminus-derivative}
\end{equation}
so the stationary condition gives Eq.~\eqref{eq:tminus-star}, and
\begin{equation}
 \ddot h_-(t_-^*)=2\gamma_-^{(0)}e^{-\Gamma_0t_-^*}(\Gamma_0-\Gamma_1)<0.
 \label{eq:hminus-second}
\end{equation}
The case $\alpha\beta<0$ is identical, with
$\abs{\alpha}+\abs{\beta}=\abs{\alpha-\beta}$.  Differentiating
Eq.~\eqref{eq:hplus} gives
\begin{equation}
 \dot h_+(t)=2\gamma_+^{(0)}e^{-\Gamma_0t}
 -2\gamma_+^{(1)}e^{-\Gamma_1t},
 \label{eq:hplus-derivative}
\end{equation}
and hence Eq.~\eqref{eq:tplus-star}; the second derivative at $t_+^*$ is
positive.  Since each derivative changes sign only once, it remains only to
compare the stationary value with branch endpoints and, on unbounded branches,
with the limiting value $L$.


\section{Entanglement-assisted discrimination}
\label{sec:assisted}

Using Eq.~\eqref{eq:Dent-def-general}, define
\begin{equation}
 D_{\rm ent}(t):=\max_\Psi\norm{(\id\otimes\Delta_t)(\Psi)}_1
 =\norm{\Delta_t}_\diamond,
 \label{eq:Dent-def}
\end{equation}
where the maximum may be restricted to pure two-qubit states.

\begin{lemma}[Schmidt reduction]
\label{lem:diamond-reduction}
For the phase-covariant map $\Delta_t$ in Eq.~\eqref{eq:matrixunits}, there is
an optimal assisted input in the Schmidt family
\begin{equation}
 \ket{\Psi_x}=\sqrt{x}\ket{0_A0_S}+\sqrt{1-x}\ket{1_A1_S},
 \qquad 0\le x\le1.
 \label{eq:Schmidt-family}
\end{equation}
Consequently,
\begin{equation}
 D_{\rm ent}(t)=\max_{0\le x\le1}D_{\rm ent}(x;t),
 \label{eq:Dent-one-variable}
\end{equation}
where
\begin{equation}
D_{\rm ent}(x;t)=
\frac12\max\{\abs{u_x},w_x\}
+\frac12\left[x\abs{\alpha+\beta}+(1-x)\abs{\alpha-\beta}\right],
\label{eq:Dent-x}
\end{equation}
with
\begin{equation}
 u_x=\beta+\alpha(2x-1),
 \qquad
 w_x=\sqrt{\left[\alpha+\beta(2x-1)\right]^2+16\zeta^2x(1-x)}.
 \label{eq:u-w}
\end{equation}
\end{lemma}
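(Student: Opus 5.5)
The plan is to prove the Schmidt reduction by symmetrizing a semidefinite-program formulation of the diamond norm, and then to compute the trace norm on the reduced family explicitly. I will use Watrous' characterization \cite{Watrous18}. For the Hermiticity-preserving difference $\Delta_t$ with Choi operator $J=\sum_{ij}\ketbra{i}{j}\otimes\Delta_t(\ketbra{i}{j})$, the diamond norm is
\begin{equation*}
\tfrac12\norm{\Delta_t}_\diamond=\max\{\langle J,W\rangle:\ 0\le W\le \rho\otimes 1,\ \rho\ge0,\ {\rm tr}\rho=1\}.
\end{equation*}
This holds because $\Delta_t$ is a difference of two channels, so ${\rm tr}\,\Delta_t(X)=0$. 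Moreover, any optimal $\rho$ corresponds to the optimal pure input $(\sqrt{\rho^T}\otimes 1)\sum_i\ket{ii}$, up to the usual transpose convention.

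First, I will show that $J$ is invariant under $\overline{U_\theta}\otimes U_\theta$. This follows directly from the phase covariance in Eq.~\eqref{eq:phase-covariance}, or can be read off Eq.~\eqref{eq:matrixunits}.

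Second, I will check that the feasible set is convex and invariant under the joint action $\rho\mapsto \overline{U_\theta}\rho\,\overline{U_\theta}^{\dagger}$, $W\mapsto(\overline{U_\theta}\otimes U_\theta)W(\overline{U_\theta}\otimes U_\theta)^\dagger$. Given an optimal pair $(\rho,W)$, averaging over $\theta\in[0,2\pi)$ keeps the objective value and keeps feasibility, since the constraints are linear. It also makes $\rho$ commute with $\sigma_z$, so $\rho={\rm diag}(x,1-x)$. The corresponding purification is exactly $\ket{\Psi_x}$ in Eq.~\eqref{eq:Schmidt-family}, which gives Eq.~\eqref{eq:Dent-one-variable}. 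The main obstacle is this step. One must be careful that the SDP value really equals the maximal trace norm attained on the purification of the averaged $\rho$. In other words, the maximization over $W$ at fixed $\rho$ must reproduce $\tfrac12\norm{(\id\otimes\Delta_t)(\Psi_\rho)}_1$. I would verify this by writing $W=(\sqrt{\rho}\otimes1)P(\sqrt{\rho}\otimes1)$ with $0\le P\le 1$, and using the standard fact that a traceless Hermitian operator has trace norm equal to twice its positive part.

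Third, I will compute $(\id\otimes\Delta_t)(\ketbra{\Psi_x}{\Psi_x})$ in the basis $\basisB$ using Eq.~\eqref{eq:matrixunits}. The states $\ket{0_A1_S}$ and $\ket{1_A0_S}$ decouple, with diagonal entries $-x(\alpha+\beta)/2$ and $(1-x)(\alpha-\beta)/2$. Their contribution to the trace norm is $\tfrac12[x\abs{\alpha+\beta}+(1-x)\abs{\alpha-\beta}]$.

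The remaining $2\times2$ block on $\{\ket{0_A0_S},\ket{1_A1_S}\}$ is
\begin{equation*}
\begin{pmatrix}a&c\\ c&d\end{pmatrix},\qquad a=\tfrac{x(\alpha+\beta)}2,\quad d=-\tfrac{(1-x)(\alpha-\beta)}2,\quad c=\zeta\sqrt{x(1-x)}.
\end{equation*}
The trace norm of a real symmetric $2\times2$ matrix is $\max\{\abs{a+d},\sqrt{(a-d)^2+4c^2}\}$. Here $a+d=u_x/2$ and $a-d=[\alpha+\beta(2x-1)]/2$, so this equals $\tfrac12\max\{\abs{u_x},w_x\}$. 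Adding the two contributions yields Eq.~\eqref{eq:Dent-x}, and maximizing over $x\in[0,1]$ completes the proof.
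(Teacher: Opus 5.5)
Your proposal is correct and follows essentially the same route as the paper: you symmetrize the Choi-operator SDP under $\overline U_\theta\otimes U_\theta$, average an optimal feasible pair to obtain a diagonal $\rho$ whose purification is $\ket{\Psi_x}$, and read off the trace norm from the block structure of $(\id\otimes\Delta_t)(\ketbra{\Psi_x}{\Psi_x})$. Your extra details only fill in steps the paper asserts without proof. These are the reduction $W=(\sqrt\rho\otimes 1)P(\sqrt\rho\otimes 1)$ together with the traceless positive-part argument, and the explicit $2\times2$ trace-norm formula $\max\{\abs{a+d},\sqrt{(a-d)^2+4c^2}\}$.
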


\begin{proof}
The diamond-norm semidefinite program for a Hermiticity-preserving map can be
written in terms of the Choi operator
\begin{equation}
 J(\Delta_t)=(\id\otimes\Delta_t)(\ket{\Omega}\bra{\Omega}),
 \qquad
 \ket{\Omega}=\ket{00}+\ket{11}.
 \label{eq:choi-def}
\end{equation}
Because of the covariance in Eq.~\eqref{eq:phase-covariance}, the Choi operator
is invariant under $\overline U_\theta\otimes U_\theta$. Since the feasible set and the objective of the SDP are invariant under this
compact group action, averaging any optimal feasible point over the group gives
another optimal feasible point with a diagonal system marginal, namely 
\begin{equation}
 \rho=x\ketbra{0}{0}+(1-x)\ketbra{1}{1}.
 \label{eq:diagonal-rho}
\end{equation}
For fixed $\rho$, the SDP value is
\begin{equation}
 \norm{(\rho^{1/2}\otimes I)J(\Delta_t)(\rho^{1/2}\otimes I)}_1,
 \label{eq:fixed-rho-value}
\end{equation}
which is precisely the output norm produced by the purification in
Eq.~\eqref{eq:Schmidt-family}.  This proves the reduction to one parameter.
For this input, in the basis $\basisB$,
\begin{equation}
\Omega_x=(\id\otimes\Delta_t)(\ket{\Psi_x}\bra{\Psi_x})=
\begin{pmatrix}
 {x(\alpha+\beta)\over2} & 0 & 0 & \zeta\sqrt{x(1-x)}\\
 0 & -{x(\alpha+\beta)\over2} & 0 & 0\\
 0 & 0 & {(1-x)(\alpha-\beta)\over2} & 0\\
 \zeta\sqrt{x(1-x)} & 0 & 0 & -{(1-x)(\alpha-\beta)\over2}
\end{pmatrix}.
\label{eq:Omegax-block}
\end{equation}
The two isolated diagonal entries contribute the second term of
Eq.~\eqref{eq:Dent-x}.  The remaining $2\times2$ block has eigenvalues
\begin{equation}
 \omega_\pm(x)=\frac14\left(u_x\pm w_x\right),
 \label{eq:omega-eigenvalues}
\end{equation}
and its trace-norm contribution is $\frac12\max\{\abs{u_x},w_x\}$.
\end{proof}

If both semigroups are genuinely relaxing, namely
$\Gamma_0,\Gamma_1>0$, then Eq.\eqref{eq:Dent-x} yields
\begin{equation}
\lim_{t\to\infty}D_{\rm ent}(t) = |\chi_0-\chi_1| = \lim_{t\to\infty}D_{\rm
sep}(t).
\end{equation}
Hence, if the unassisted strategy attains a value strictly larger than
the common asymptotic value at a finite time, then the assisted
strategy also has a finite-time optimum, because \(D_{\rm ent}(t)\ge
D_{\rm sep}(t)\).  The converse need not hold.

\begin{remark}
\label{rem:schmidt-nonunique}
The Schmidt-form input need not be unique.  The reduction uses the common
$U(1)$ phase covariance of the two channels.  A strict operational advantage at
fixed time requires $D_{\rm ent}(t)>D_{\rm sep}(t)$; after timing optimization it
requires Eq.~\eqref{eq:global-assisted-criterion}.
\end{remark}

The identity
\begin{equation}
 w_x^2-u_x^2=4x(1-x)(4\zeta^2+\alpha^2-\beta^2)
 \label{eq:wminus-u}
\end{equation}
is useful below.  If $4\zeta^2\le \beta^2-
\alpha^2$, then $\abs{u_x}\ge w_x$ for all $x$, and the maximum in
Eq.~\eqref{eq:Dent-x} is attained at $x=0$ or $x=1$.  In that case side
entanglement cannot improve the fixed-time distinguishability.  A nontrivial entangled optimum therefore requires
an interior Schmidt parameter, although an interior optimum alone does not prove
a global timing advantage.

\begin{proposition}[Fixed-time entanglement advantage]
\label{prop:entass}
Fix $t$ and write $\alpha=\alpha(t)$, $\beta=\beta(t)$, and
$\zeta=\zeta(t)$.  Then
\begin{equation}
 D_{\rm ent}(t)>D_{\rm sep}(t)
 \label{eq:fixed-time-advantage}
\end{equation}
whenever one of the following conditions holds:
\begin{enumerate}[(i)]
 \item $\beta=0$ and $\alpha\zeta\ne0$;
 \item $\alpha=0$ and $0<\beta^2<4\zeta^2$;
 \item $\zeta^2\le\beta^2<4\zeta^2-2\abs{\alpha\beta}-\min\{\alpha^2,\beta^2\}$.
\end{enumerate}
Consequently, the assisted strategy is strictly better after timing
optimization if there exists $t_0\ge0$ such that
\begin{equation}
 D_{\rm ent}(t_0)>\sup_{t\ge0}D_{\rm sep}(t).
 \label{eq:sufficient-global-advantage}
\end{equation}
\end{proposition}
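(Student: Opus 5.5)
The plan is to compare $D_{\rm sep}(t)$ from Eq.~\eqref{eq:Dsep} with lower bounds on $D_{\rm ent}(t)$. By Lemma~\ref{lem:diamond-reduction}, $D_{\rm ent}(t)\ge D_{\rm ent}(x;t)$ for every $x$. Moreover, since $\max\{\abs{u_x},w_x\}\ge w_x$,
\begin{equation*}
D_{\rm ent}(x;t)\ \ge\ \tfrac12 w_x+\tfrac12\left[x\abs{\alpha+\beta}+(1-x)\abs{\alpha-\beta}\right].
\end{equation*}
In each case I would pick a convenient $x$, or argue via the one-sided derivative at an endpoint, and show that this exceeds $D_{\rm sep}(t)$. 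The final statement about timing optimization is then immediate: $D_{\rm ent}(t_0)>\sup_t D_{\rm sep}(t)$ gives $\sup_t D_{\rm ent}>\sup_t D_{\rm sep}$, which is Eq.~\eqref{eq:global-assisted-criterion}.

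Cases (i) and (ii) use the symmetric point $x=1/2$.

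For (i), $\beta=0$ and $\zeta\neq0$ put us on the interior branch, so $D_{\rm sep}=\sqrt{\alpha^2+\zeta^2}$. At $x=1/2$ we have $w_{1/2}=\sqrt{\alpha^2+4\zeta^2}$, and the second term equals $\abs{\alpha}/2$. The desired inequality $\sqrt{\alpha^2+4\zeta^2}+\abs\alpha>2\sqrt{\alpha^2+\zeta^2}$ becomes, after squaring, $2\abs\alpha\sqrt{\alpha^2+4\zeta^2}>2\alpha^2$. This holds exactly when $\alpha\zeta\neq0$.

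For (ii), $\alpha=0$ gives $D_{\rm sep}=\max\{\abs\zeta,\abs\beta\}$, reading off both branches. At $x=1/2$ we get $D_{\rm ent}\ge\abs\zeta+\abs\beta/2$. This is strictly larger than $\max\{\abs\zeta,\abs\beta\}$ precisely when $0<\abs\beta<2\abs\zeta$.

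Case (iii) is the main step. Here $\zeta^2\le\beta^2$ forces the boundary branch, so $D_{\rm sep}=\abs\alpha+\abs\beta$. A midpoint evaluation does not reproduce the stated threshold, so I would instead perturb away from the optimal endpoint. Note that $D_{\rm ent}(1;t)=\abs{\alpha+\beta}$ and $D_{\rm ent}(0;t)=\abs{\alpha-\beta}$.

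By the symmetry $x\leftrightarrow1-x$, $\alpha\to-\alpha$, assume $\alpha\beta\ge0$, so that $D_{\rm ent}(1;t)=\abs\alpha+\abs\beta=D_{\rm sep}$. The hypothesis forces $\zeta\ne0$ and hence $\alpha+\beta\neq0$; it also gives $4\zeta^2+\alpha^2-\beta^2>0$, so by Eq.~\eqref{eq:wminus-u} we have $w_x\ge\abs{u_x}$ near $x=1$.

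Set $x=1-\varepsilon$. Expanding gives
\begin{equation*}
w^2=(\alpha+\beta)^2+\varepsilon\,[16\zeta^2-4\beta(\alpha+\beta)]+O(\varepsilon^2),
\end{equation*}
and the bracket term changes by $-2\varepsilon\min\{\abs\alpha,\abs\beta\}$. The right derivative of $D_{\rm ent}(1-\varepsilon;t)$ at $\varepsilon=0$ is therefore
\begin{equation*}
\frac{4\zeta^2-\beta^2-\abs{\alpha\beta}}{\abs\alpha+\abs\beta}-\min\{\abs\alpha,\abs\beta\}.
\end{equation*}
Using $\min\cdot(\abs\alpha+\abs\beta)=\abs{\alpha\beta}+\min\{\alpha^2,\beta^2\}$, this derivative is positive if and only if $4\zeta^2>\beta^2+2\abs{\alpha\beta}+\min\{\alpha^2,\beta^2\}$, which is exactly condition (iii). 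Hence a small $\varepsilon>0$ yields $D_{\rm ent}(t)>\abs\alpha+\abs\beta=D_{\rm sep}(t)$.

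The main obstacle is this endpoint derivative computation. In particular, one must carefully track the signs of $\alpha+\beta$ and $\alpha-\beta$ under the assumption $\alpha\beta\ge0$, and justify that it is the $w_x$ branch of the maximum, not $\abs{u_x}$, that is active near the endpoint.
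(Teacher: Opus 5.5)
Your proof is correct and follows the paper's argument. You use $x=1/2$ for (i) and (ii), and for (iii) you start from the boundary value $\abs{\alpha}+\abs{\beta}$ and take a one-sided derivative at the optimal endpoint. The paper treats $\alpha\beta<0$ and $\alpha\beta\ge0$ separately, whereas you reduce to one sign case via the symmetry $x\leftrightarrow1-x$, $\alpha\to-\alpha$. Because you work with the lower bound $\max\{\abs{u_x},w_x\}\ge w_x$, the concern you flag about which branch of the maximum is active near the endpoint does not arise. Your explicit simplification of the derivative also shows that condition (iii) is exactly the positivity condition, which the paper only asserts.
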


\begin{proof}
If $\beta=0$, then $D_{\rm sep}=\sqrt{\alpha^2+\zeta^2}$ and
from Lemma \ref{lem:diamond-reduction} (Eq.~\eqref{eq:Dent-x} at $x=1/2$) we obtain
\begin{equation}
 D_{\rm ent}(t)\ge\frac12\left(\abs{\alpha}+\sqrt{\alpha^2+4\zeta^2}\right)>
 \sqrt{\alpha^2+\zeta^2},
 \label{eq:beta-zero-proof}
\end{equation}
whenever $\alpha\zeta\ne0$.  If $\alpha=0$, then
$D_{\rm sep}=\max\{\abs{\beta},\abs{\zeta}\}$, while from Lemma \ref{lem:diamond-reduction}
\begin{equation}
 D_{\rm ent}(t)\ge D_{\rm ent}(1/2;t)=\abs{\zeta}+\frac{\abs{\beta}}2.
 \label{eq:alpha-zero-proof}
\end{equation}
This is strictly larger than $D_{\rm sep}$ exactly when
$0<\abs{\beta}<2\abs{\zeta}$.

For the third condition, the assumption $\zeta^2\le\beta^2$ puts the
unassisted optimum on the boundary, so
\begin{equation}
D_{\rm sep}(t)=\abs{\alpha}+\abs{\beta}.
\label{eq:prop-third-Dsep-boundary}
\end{equation}
The strict upper bound in condition (iii) implies
\begin{equation}
4\zeta^2+\alpha^2-\beta^2>0,
\label{eq:prop-third-K-positive}
\end{equation}
and therefore, by Eq.~\eqref{eq:wminus-u}, $w_x>\abs{u_x}$ for all
$0<x<1$.  Thus near the relevant endpoint the assisted expression is
\begin{equation}
D_{\rm ent}(x;t)=
\frac12 w_x+\frac12\left[x\abs{\alpha+\beta}+(1-x)\abs{\alpha-\beta}\right].
\label{eq:prop-third-local-Dent}
\end{equation}
If $\alpha\beta<0$, the unassisted boundary value is attained at $x=0$.
A direct one-sided differentiation gives
\begin{equation}
\left.\frac{d}{dx}D_{\rm ent}(x;t)\right|_{x=0^+}
=
-\abs{\beta}+\frac{4\zeta^2}{\abs{\alpha-\beta}}
+\frac12\left(\abs{\alpha+\beta}-\abs{\alpha-\beta}\right).
\label{eq:prop-third-derivative-left}
\end{equation}
Condition (iii) makes the right-hand side positive.  Therefore
$D_{\rm ent}(x;t)>D_{\rm ent}(0;t)=D_{\rm sep}(t)$ for all sufficiently small
positive $x$.  If $\alpha\beta\ge0$, the unassisted boundary value is attained
at $x=1$, and
\begin{equation}
\left.\frac{d}{dx}D_{\rm ent}(x;t)\right|_{x=1^-}
=
\abs{\beta}-\frac{4\zeta^2}{\abs{\alpha+\beta}}
+\frac12\left(\abs{\alpha+\beta}-\abs{\alpha-\beta}\right).
\label{eq:prop-third-derivative-right}
\end{equation}
Condition (iii) makes this one-sided derivative negative.  Hence replacing
$x=1$ by $x=1-\varepsilon$, with $\varepsilon>0$ small, strictly increases the
assisted distinguishability.  In both sign cases an interior Schmidt parameter
gives a value strictly larger than the unassisted boundary value.  The final
assertion follows immediately from Eqs.~\eqref{eq:ultimate-errors} and
\eqref{eq:sufficient-global-advantage}.
\end{proof}

An interior maximizer \(0<x<1\) certifies that an entangled input is
optimal at that fixed time, but a global advantage requires comparison
with the best unassisted value over all times.

\begin{remark}
\label{rem:branch-structure}
The active branches in Eqs.~\eqref{eq:Dsep} and \eqref{eq:Dent-x} need not be
separated by a single transition time.  The branch boundaries are zero sets of
exponential polynomials, such as (looking at Eq.~\eqref{eq:Dsep})
\begin{equation}
Q_{\rm sep}(t)=\zeta(t)^2-\beta(t)^2-\abs{\alpha(t)\beta(t)}
\label{eq:Qsep-branch-test}
\end{equation}
for the unassisted interior branch and (looking at Eq.~\eqref{eq:wminus-u})
\begin{equation}
K(t)=4\zeta(t)^2+\alpha(t)^2-\beta(t)^2
\label{eq:K-assisted-branch-test}
\end{equation}
for the assisted $w_x$-dominated regime.  Without additional monotonicity
assumptions these functions can have several positive zeros.  Thus the branch
domains should be understood, in general, as unions of connected intervals.  In
numerical optimizations this causes no difficulty: one may either optimize the
original one-variable expressions directly or compare the maxima over all
connected branch components.
\end{remark}


\section{Examples}
\label{sec:examples}

The preceding analytical results isolate several distinct mechanisms for
optimal timing and for the usefulness of side entanglement.  The following
examples illustrate the interior unassisted branch, the boundary unassisted
branch, and the three sufficient fixed-time mechanisms in
Proposition~\ref{prop:entass}.  The numerical optimizations use
Eqs.~\eqref{eq:Dsep} and \eqref{eq:Dent-one-variable}; branch changes are handled
as described in Remark~\ref{rem:branch-structure}.

\subsection{Interior unassisted optimum}
\label{sec:ex-interior}
Consider the case of two processes with the same
longitudinal dynamics and different transverse damping:
\begin{equation}
 (\gamma_+^{(0)},\gamma_-^{(0)},\gamma_z^{(0)})=(1,1,0),
 \qquad
 (\gamma_+^{(1)},\gamma_-^{(1)},\gamma_z^{(1)})=(1,1,1/2).
 \label{eq:lemma1-example-rates}
\end{equation}
Then
\begin{equation}
 \alpha(t)=0,
 \qquad
 \beta(t)=0,
 \qquad
 \zeta(t)=e^{-t}-e^{-2t}.
 \label{eq:lemma1-example-abz}
\end{equation}
Thus
\begin{equation}
D_{\rm sep}(t)=e^{-t}-e^{-2t},
\qquad
 t^*=\log2,
 \qquad
 D_{\rm sep}(t^*)=\frac14.
 \label{eq:lemma1-example-result}
\end{equation}
In this purely transverse case side entanglement does not improve the
trace-norm distinguishability.  Indeed, Eq.~\eqref{eq:Dent-x} gives
\begin{equation}
D_{\rm ent}(t)=D_{\rm sep}(t)=|\zeta(t)|
\label{eq:lemma1-ent-useless}
\end{equation}
for every $t$.  See Fig.~\ref{fig:lemma1-interior}.

\begin{figure}[t] \centering
\includegraphics[width=0.52\textwidth]{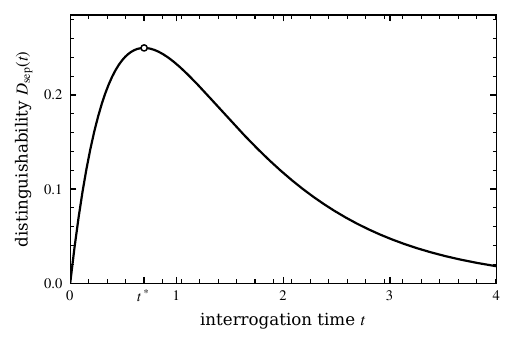}
\caption{Interior-branch timing for the rates in
Eq.~\eqref{eq:lemma1-example-rates}.  The unassisted
distinguishability reaches its maximum $D_{\rm sep}^{*}=1/4$ at
$t^*=\log2$. Side entanglement gives no improvement in this example.}
\label{fig:lemma1-interior} \end{figure}

\subsection{Boundary optimum without entanglement}
\label{sec:ex-boundary}

Take
\begin{equation}
(\gamma_+^{(0)},\gamma_-^{(0)},\gamma_z^{(0)})=(0.2,0.8,0.5), \qquad
(\gamma_+^{(1)},\gamma_-^{(1)},\gamma_z^{(1)})=(0.5,1.5,0.25).
\label{eq:lemma2-example-rates}
\end{equation}
Here the transverse
contractions coincide, namely $\zeta(t)=0$. One has $\Gamma_0=1$ and
$\Gamma_1=2$, and Fact~\ref{cond2} gives
\begin{equation}
t_-^*=\log\frac{1.5}{0.8}\simeq0.6286, \qquad D_{\rm
sep}(t_-^*)=D_{\rm ent}(t_-^*)\simeq0.3267.
\label{eq:lemma2-example-result}
\end{equation}
The asymptotic value
is \(\abs{\chi_0-\chi_1}=0.1\), while Fact~\ref{cond2} gives a finite
boundary-branch maximum at \(t_-^*\simeq0.6286\). Since $\zeta=0$,
entanglement is useless. See Fig.~\ref{fig:lemma2-boundary}.

\begin{figure}[t]
\centering \includegraphics[width=0.52\textwidth]{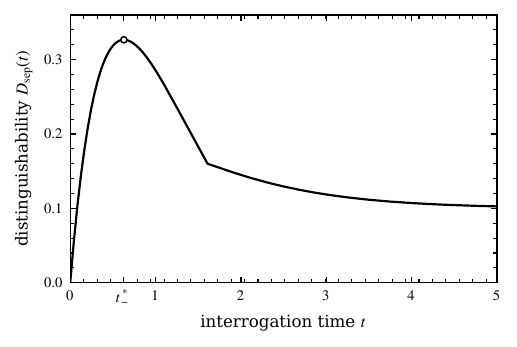}
\caption{Boundary-branch timing for the rates in
  Eq.~\eqref{eq:lemma2-example-rates}.  The finite maximum occurs at
  $t_-^*\simeq0.6286$, and side entanglement gives no improvement
  because $\zeta(t)=0$.}
\label{fig:lemma2-boundary}
\end{figure}

\subsection{Finite-time advantage with $\beta=0$}
\label{sec:ex-beta-zero}

Let
\begin{equation}
(\gamma_+^{(0)},\gamma_-^{(0)},\gamma_z^{(0)})=(0.7,0.3,0),
\qquad
(\gamma_+^{(1)},\gamma_-^{(1)},\gamma_z^{(1)})=(0.3,0.7,1).
\label{eq:finite-ent-example-rates}
\end{equation}
Both dynamical processes have $\Gamma=1$, while their stationary polarizations and
transverse damping rates differ. With $u=e^{-t}$,
\begin{equation}
\alpha(t)=0.8(1-u),
\qquad
\beta(t)=0,
\qquad
\zeta(t)=u^{1/2}(1-u^2).
\label{eq:beta-zero-example-abz}
\end{equation}
The conditions of Proposition \ref{prop:entass} (i) are fulfilled.
The unassisted supremum is
\begin{equation}
\sup_{t\ge0}D_{\rm sep}(t)=0.8,
\qquad
p_E^*=0.3,
\label{eq:beta-zero-sep-result}
\end{equation}
and it is reached only as $t\to\infty$.  The assisted optimum is
reached at finite time:
\begin{equation}
 t^*\simeq1.4508,
 \qquad
 x^*=1/2,
\qquad D_{\rm ent}(t^*)\simeq0.8568,
\qquad
\widetilde p_E^{\,*}\simeq0.2858.
\label{eq:beta-zero-ent-result}
\end{equation}
See Fig.~\ref{fig:beta-zero-finite-vs-infinite}. 

\begin{figure}[t]
\centering
\includegraphics[width=0.52\textwidth]{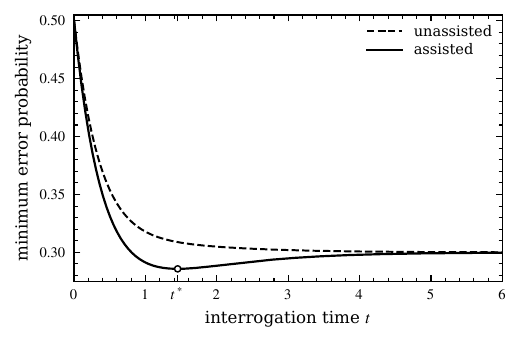}
\caption{Minimum error probability versus interrogation time for the
  rates in Eq.~\eqref{eq:finite-ent-example-rates}.  The unassisted
  infimum $p_E^*=0.3$ is reached only asymptotically, whereas the
  entanglement-assisted strategy reaches a smaller error probability
  $\widetilde p_E^{\,*}\simeq0.2858$ at the finite time
  $t^*\simeq1.4508$.}
\label{fig:beta-zero-finite-vs-infinite}
\end{figure}

\subsection{Threshold behaviour with $\alpha=0$}
\label{sec:ex-alpha-zero}

Compare the symmetric damping process $(1,1,0)$ with the pure-dephasing process
$(0,0,r)$.  Setting $u=e^{-2t}$ gives
\begin{equation}
\alpha(t)=0,
\qquad
\beta(t)=u-1,
\qquad
\zeta(t)=u^{1/2}-u^r.
\label{eq:threshold-abz}
\end{equation}
Therefore
\begin{equation}
D_{\rm sep}(t)=\max\{1-u,\abs{u^{1/2}-u^r}\},
\qquad
\sup_{t\ge0}D_{\rm sep}(t)=1,
\label{eq:threshold-Dsep}
\end{equation}
with the supremum reached only asymptotically as $t\to\infty$.  
The assisted strategy improves on the asymptotic unassisted optimum iff
\begin{equation}
\max_{0<u<1}\left[\frac{1-u}{2}+u^r-u^{1/2}\right]>1.
\label{eq:threshold-condition}
\end{equation}
The threshold is
\begin{equation}
 r_c\simeq0.1100.
 \label{eq:threshold-rc}
\end{equation}
For $r<r_c$, the conditions of Proposition \ref{prop:entass} (ii) are fulfilled and
the entanglement-assisted strategy reaches
$\widetilde p_E^{\,*}<1/4$ at finite time, whereas for $r\ge r_c$ the ultimate
assisted and unassisted error probabilities coincide.  For example, at $r=0.1$
one finds
\begin{equation}
 t^*\simeq2.1496,
 \qquad D_{\rm ent}(t^*) \simeq1.0272,
\qquad
\widetilde p_E^{\,*}\simeq0.2432.
\label{eq:threshold-numerics}
\end{equation}
See Fig.~\ref{fig:alpha-zero-threshold}.

\begin{figure}[t]
\centering
\includegraphics[width=0.52\textwidth]{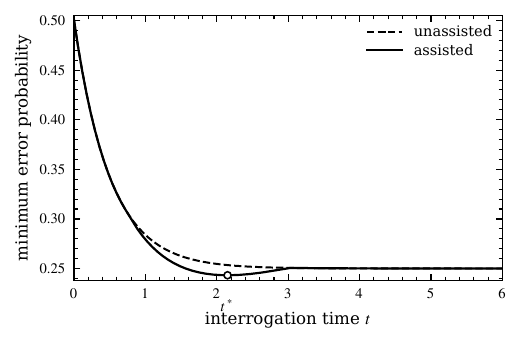}
\caption{Threshold example for discriminating the damping process
  $(1,1,0)$ from the pure-dephasing process $(0,0,r)$, with
  representative value $r=0.1<r_c$. The unassisted infimum is
  $p_E^*=1/4$, while the assisted strategy reaches the smaller
  value $\widetilde p_E^{\,*}\simeq0.2432$ at finite time.}
\label{fig:alpha-zero-threshold}
\end{figure}

\subsection{Advantage from Proposition~\ref{prop:entass}(iii)}
\label{sec:ex-prop-third}
The previous two examples illustrate the special cases $\beta=0$ and
$\alpha=0$.  We now give an example in which all three quantities
$\alpha$, $\beta$, and $\zeta$ are nonzero and the third sufficient
condition of Proposition~\ref{prop:entass} applies.  Let
\begin{equation}
(\gamma_+^{(0)},\gamma_-^{(0)},\gamma_z^{(0)})=(0.6,0.1,0.1),
\qquad
(\gamma_+^{(1)},\gamma_-^{(1)},\gamma_z^{(1)})=(1.3,0.6,0.1).
\label{eq:prop-third-example-rates}
\end{equation}
At $t_0\simeq1.1841$,
\begin{equation}
\alpha(t_0)\simeq0.0729,
\qquad
\beta(t_0)\simeq0.3311,
\qquad
\zeta(t_0)\simeq0.2652,
\label{eq:prop-third-abz}
\end{equation}
and
\begin{equation}
\zeta(t_0)^2\simeq0.0703
\le
\beta(t_0)^2\simeq0.1096
<0.2277
\simeq4\zeta(t_0)^2-2\abs{\alpha(t_0)\beta(t_0)}-
\min\{\alpha(t_0)^2,\beta(t_0)^2\}.
\label{eq:prop-third-condition}
\end{equation}
Hence Proposition~\ref{prop:entass}(iii) applies.  Numerically,
\begin{equation}
D_{\rm sep}(t_0)\simeq0.4040,
\qquad
D_{\rm ent}(t_0)\simeq0.4441,
\qquad
x^*\simeq0.6829.
\label{eq:prop-third-values}
\end{equation}
Since $\sup_tD_{\rm sep}(t)\simeq0.4093$, this fixed-time assisted value also
proves a global timing advantage. See
Fig.~\ref{fig:prop-third-advantage}. 

\begin{figure}[t]
\centering
\includegraphics[width=0.52\textwidth]{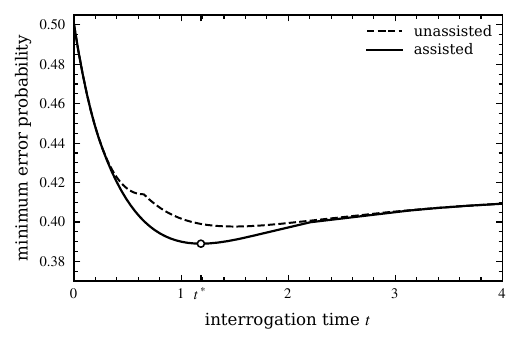}
\caption{Entanglement advantage from Proposition~\ref{prop:entass}(iii) for the rates in Eq.~\eqref{eq:prop-third-example-rates}.  The assisted value at finite time is larger than the globally optimized unassisted value.}
\label{fig:prop-third-advantage}
\end{figure}

\subsection{Generic numerical examples}
\label{sec:generic-numerical-examples}
We finally report two generic numerical instances in which no
simplifying condition on $\alpha$, $\beta$, or $\zeta$ is imposed.
These examples serve as a check that the one-dimensional formula
Eq.~\eqref{eq:Dent-one-variable} captures entanglement advantage also
away from the transparent regimes discussed above.  For
\begin{equation}
 (\gamma_+^{(0)},\gamma_-^{(0)},\gamma_z^{(0)})=(1.1,1.2,0.2),
 \qquad
 (\gamma_+^{(1)},\gamma_-^{(1)},\gamma_z^{(1)})=(1.2,0.95,0.5),
 \label{eq:generic-example-1-rates}
\end{equation}
one finds
\begin{equation}
 \max_tD_{\rm sep}(t)\simeq0.1685 \quad (t\simeq0.9513),
 \qquad
 \max_tD_{\rm ent}(t)\simeq0.1867 \quad (t\simeq0.8069,\ x\simeq0.5960).
 \label{eq:generic-example-1-values}
\end{equation}
For
\begin{equation}
 (\gamma_+^{(0)},\gamma_-^{(0)},\gamma_z^{(0)})=(1.2,0.7,0.5),
 \qquad
 (\gamma_+^{(1)},\gamma_-^{(1)},\gamma_z^{(1)})=(0.9,0.6,0.2),
 \label{eq:generic-example-2-rates}
\end{equation}
one obtains
\begin{equation}
 \max_tD_{\rm sep}(t)\simeq0.2045 \quad (t\simeq0.6857),
 \qquad
 \max_tD_{\rm ent}(t)\simeq0.2417 \quad (t\simeq0.6575,\ x\simeq0.3954).
 \label{eq:generic-example-2-values}
\end{equation}
See Figs. \ref{fig:generic-example-1} and \ref{fig:generic-example-2}.

\begin{figure}[t]
\centering
\includegraphics[width=0.52\textwidth]{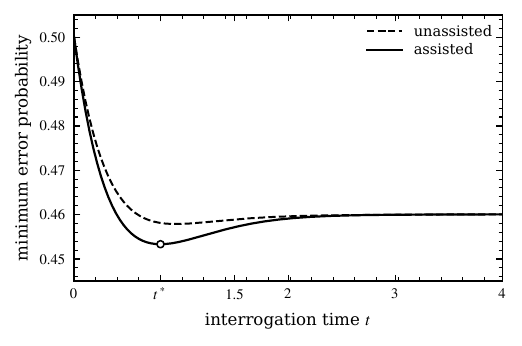}
\caption{Generic numerical example for the rates in
  Eq.~\eqref{eq:generic-example-1-rates}.
  The assisted strategy gives a lower minimum error probability than the unassisted strategy.}
\label{fig:generic-example-1}
\end{figure}

\begin{figure}[t]
\centering
\includegraphics[width=0.52\textwidth]{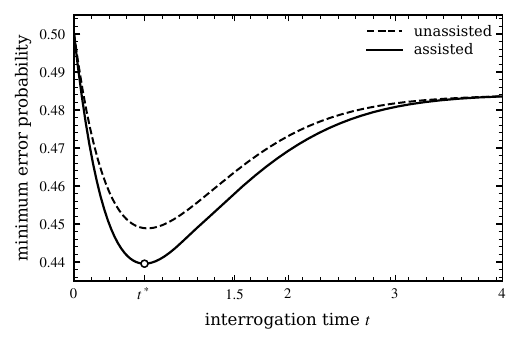}
\caption{Generic numerical example for the rates in
  Eq.~\eqref{eq:generic-example-2-rates}. The assisted strategy gives a lower minimum error probability than the unassisted strategy.}
\label{fig:generic-example-2}
\end{figure}

Before concluding, we record one final remark.
\begin{remark}[Asymptotic assisted advantage]
An assisted advantage need not imply that the assisted optimum is reached at
finite time.  Consider
\[
(\gamma_+^{(0)},\gamma_-^{(0)},\gamma_z^{(0)})=(0,0,0),
\qquad
(\gamma_+^{(1)},\gamma_-^{(1)},\gamma_z^{(1)})=(2,1,0).
\]
Then
\[
\alpha(t)=-\frac13(1-e^{-3t}),\qquad
\beta(t)=1-e^{-3t},\qquad
\zeta(t)=1-e^{-3t/2}.
\]
The unassisted optimum is on the boundary branch for all finite \(t>0\), and
\[
D_{\rm sep}(t)=\frac43(1-e^{-3t}),
\qquad
\sup_tD_{\rm sep}(t)=\frac43,
\]
with the supremum reached only as \(t\to\infty\).  On the other hand, at
\(t=\infty\),
\[
(\alpha,\beta,\zeta)=\left(-\frac13,1,1\right).
\]
Since in this example both \(1-e^{-3t}\) and \(1-e^{-3t/2}\)
increase monotonically to one, the fixed-\(x\) expression in
Eq.~\eqref{eq:Dent-x} is bounded above by its asymptotic value.
Hence the assisted supremum is reached only as \(t\to\infty\),
and Eq.~\eqref{eq:Dent-x} gives
\[
\sup_tD_{\rm ent}(t)=D_{\rm ent}(\infty)=\frac{14}{9},
\]
attained asymptotically with \(x^*=1/3\).  Hence both strategies are
asymptotic, but the entanglement-assisted strategy is strictly better:
\[
\widetilde p_E^*=\frac19<\frac16=p_E^*.
\]
\end{remark}

Within the present Markovian phase-covariant amplitude-damping
semigroups, a strict entanglement advantage at the asymptotic time
requires a surviving transverse component of the difference map.
Since $\lambda(\infty)\ne0 $ only when $\Gamma=\gamma_z=0$, this can
occur only if one of the two candidate semigroups is the identity
semigroup.  If neither semigroup is the identity, then
\(\zeta(\infty)=0\), and the asymptotic difference is diagonal in the
energy basis; consequently $D_{\rm ent}(\infty)=D_{\rm
sep}(\infty)$. Thus the genuinely nontrivial asymptotic entanglement
advantage is specific to comparisons in which one process retains
coherence indefinitely.

\section{Conclusions}
\label{sec:conclusions}
We have investigated minimum-error discrimination of two phase-covariant
amplitude-damping semigroups when both the probe state and the interrogation
time are optimized.  The problem differs qualitatively from static channel
discrimination because the best time at which to compare the two channels is not
known a priori.  It also differs from the Pauli-semigroup setting of
Ref.~\cite{Sacchi26}: the channels considered here are generally non-unital, and
the displacement of the Bloch ball makes the structure of the optimal probes
more delicate.

For unassisted probes we obtained a closed expression for the optimal
trace-norm distinguishability at fixed time.  The result separates the problem
into an interior branch, where the optimal input is a coherent superposition,
and a boundary branch, where an eigenstate of $\sigma_z$ is optimal.

For entanglement-assisted probes we used the common phase covariance of the two
dynamics to reduce the diamond-norm optimization to a one-parameter Schmidt
family.  The resulting expression identifies when a nontrivial Schmidt
parameter can improve the fixed-time distinguishability, and it separates
fixed-time entanglement advantage from advantage after optimization over time.
This distinction is essential: a fixed-time improvement does not automatically
imply that the globally optimized assisted error probability is smaller than the
globally optimized unassisted one.

The examples show several regimes.  In purely transverse discrimination,
entanglement is useless because an unassisted equatorial state already extracts
the full available coherence contrast.  In boundary-branch examples with
$\zeta=0$, side entanglement is again useless.  By contrast, when longitudinal
translation and transverse damping compete, entanglement can give a strict
advantage.  In particular, we exhibited cases in which the unassisted optimum is
approached only as $t\to\infty$, whereas the assisted strategy reaches a lower
error probability at a finite interrogation time.  We also found threshold
behavior and examples in which all three parameters $\alpha$, $\beta$, and
$\zeta$ contribute to the advantage.

These results clarify the role of side entanglement in dynamical channel
discrimination.  Entanglement is not merely a static resource that increases a
norm distance between two channels; in open-system discrimination it can also
change the relevant time scale of the optimal experiment.  This suggests
several directions for further work, including extensions to unequal priors,
finite-time control during the interrogation, non-Markovian dynamics, and
multishot strategies in which sequential or adaptive measurements may exploit
the time dependence of the processes more efficiently.



\begin{thebibliography}{99}
\bibitem{Helstrom76} 
C. W. Helstrom, \emph{Quantum Detection and Estimation Theory} (Academic Press, New York, 1976).
\bibitem{BaeKwek15}
J. Bae and L.-C. Kwek, \textit{Quantum state discrimination and its applications},
J. Phys. A: Math. Theor. \textbf{48}, 083001 (2015).
\bibitem{Sacchi05a} M. F. Sacchi, 
\textit{Optimal discrimination of quantum operations},
Phys. Rev. A \textbf{71}, 062340 (2005).
\bibitem{Watrous18} 
J. Watrous, \emph{The Theory of Quantum Information} (Cambridge University Press, Cambridge, 2018).
\bibitem{Sacchi05c}
M. F. Sacchi, \textit{Entanglement can enhance the distinguishability of entanglement-breaking channels},
Phys. Rev. A \textbf{72}, 014305 (2005).
\bibitem{PianiWatrous09}
M. Piani and J. Watrous, \textit{All entangled states are useful for channel discrimination},
Phys. Rev. Lett. \textbf{102}, 250501 (2009).
\bibitem{Oskouei23}
S. K. Oskouei, S. Mancini, M. Rexiti, \textit{Profitable entanglement for channel discrimination}, 
Proc. R. Soc. A \textbf{479}, 20220796 (2023).
\bibitem{dsk} G. M. D'Ariano, M. F. Sacchi, and J. Kahn,
  \textit{Minimax discrimination of two Pauli channels}, Phys. Rev. A
  \textbf{72}, 052302 (2005).
\bibitem{Wang06} G. Wang, M. Ying, \textit{Unambiguous discrimination
  among quantum operations}, Phys. Rev. A \textbf{73}, 042301 (2006).
\bibitem{Duan09}
R. Duan, Y. Feng, M. Ying, \textit{Perfect distinguishability of quantum operations}, 
Phys. Rev. Lett. \textbf{103}, 210501 (2009).
\bibitem{Chiribella08}
G. Chiribella, G. M. D'Ariano, and P. Perinotti,
\textit{Memory effects in quantum channel discrimination},
Phys. Rev. Lett. \textbf{101}, 180501 (2008).
\bibitem{Harrow10}
A. W. Harrow, A. Hassidim, D. W. Leung, and J. Watrous,
\textit{Adaptive versus nonadaptive strategies for quantum channel discrimination},
Phys. Rev. A \textbf{81}, 032339 (2010).
\bibitem{Gorini76}
V. Gorini, A. Kossakowski, and E. C. G. Sudarshan,
\textit{Completely positive dynamical semigroups of N-level systems},
J. Math. Phys. \textbf{17}, 821 (1976).
\bibitem{Lindblad76}
G. Lindblad,
\textit{On the generators of quantum dynamical semigroups},
Commun. Math. Phys. \textbf{48}, 119 (1976).
\bibitem{BreuerPetruccione02}
H.-P. Breuer and F. Petruccione, \emph{The Theory of Open Quantum Systems}
(Oxford University Press, Oxford, 2002).
\bibitem{RivasHuelga12}
A. Rivas and S. F. Huelga, \emph{Open Quantum Systems: An Introduction}
(Springer, Berlin, 2012).
\bibitem{Sacchi26} M. F. Sacchi, \textit{Entanglement and optimal timing in discriminating quantum dynamical processes}, Phys. Lett. A \textbf{568}, 131199 (2026).
\bibitem{Jevtic15}
S. Jevtic, D. Newman, T. Rudolph, and T. M. Stace,
\textit{Single-qubit thermometry}, Phys. Rev. A \textbf{91}, 012331 (2015).
\bibitem{Candeloro21}
A. Candeloro and M. G. A. Paris,
\textit{Discrimination of Ohmic thermal baths by quantum dephasing probes},
Phys. Rev. A \textbf{103}, 012217 (2021).
\bibitem{Farina19}
D. Farina, V. Cavina, and V. Giovannetti,
\textit{Quantum bath statistics tagging}, Phys. Rev. A \textbf{100}, 042327 (2019).
\bibitem{Farina22}
D. Farina, V. Cavina, M. G. Genoni, and V. Giovannetti,
\textit{Entanglement-assisted, noise-assisted, and monitoring-enhanced quantum bath tagging},
Phys. Rev. A \textbf{106}, 042609 (2022).
\bibitem{Filippov20} 
S. N. Filippov, A. N. Glinov, L. Lepp\"aj\"arvi, 
\textit{Phase covariant qubit dynamics and divisibility},
Lobachevskii J. Math. \textbf{41}, 617 (2020).
\bibitem{Siudzinska22} 
K. Siudzi\'nska, \textit{Phase-covariant mixtures of non-unital qubit maps},
J. Phys. A: Math. Theor. \textbf{55}, 405303 (2022).
\bibitem{Sacchi05b}
M. F. Sacchi, \textit{Minimum error discrimination of Pauli channels}, 
J. Opt. B: Quantum Semiclass. Opt. \textbf{7}, S333 (2005).
\bibitem{Rexiti21}
M. Rexiti, S. Mancini, \textit{Discriminating qubit amplitude damping channels}, 
J. Phys. A: Math. Theor. \textbf{54}, 165303 (2021).
\bibitem{Nakahira21}
K. Nakahira, K. Kato, \textit{Simple Upper and Lower Bounds on the Ultimate Success Probability for Discriminating Arbitrary Finite-Dimensional Quantum Processes}, 
Phys. Rev. Lett. \textbf{126}, 200502 (2021).
\bibitem{Rexiti22}
M. Rexiti, L. Memarzadeh, S. Mancini, \textit{Discrimination of dephasing channels}, 
J. Phys. A: Math. Theor. \textbf{55}, 245301 (2022).
\bibitem{Cooney16}
T. Cooney, M. Mosonyi, and M. M. Wilde,
\textit{Strong converse exponents for a quantum channel discrimination problem and quantum-feedback-assisted communication},
Commun. Math. Phys. \textbf{344}, 797 (2016).
\end{thebibliography}
\end{document}